
\documentclass[10pt, conference]{IEEEtran}

\usepackage{amsmath,amsthm,amsfonts}
\usepackage{graphicx}
\usepackage{epsf}
\usepackage{mdwmath}
\usepackage{mdwtab}
\usepackage{cite}
\usepackage{amssymb}
\usepackage{algorithmic}
\usepackage{algorithm}
\usepackage{amssymb}
\usepackage{algorithmic}
\usepackage{algorithm}
\usepackage{bm}
\usepackage{color}

\newtheorem{thm}{Theorem}

\theoremstyle{remark}

\newcounter{defn}
\newtheorem{definition}[defn]{Definition}

\newcounter{prop}
\newtheorem{proposition}[prop]{Proposition}

\def\1{\bm{1}}

\begin{document}

\title{Contamination Estimation via Convex Relaxations}

\author{\IEEEauthorblockN{Matthew L. Malloy}
\IEEEauthorblockA{comScore \\ mmalloy@comscore.com}
\and
\IEEEauthorblockN{Scott Alfeld}
\IEEEauthorblockA{University of Wisconsin \\ salfeld@cs.wisc.edu }
\and
\IEEEauthorblockN{Paul Barford}
\IEEEauthorblockA{\hspace{-1cm}  comScore, University of Wisconsin \hspace{-1cm} \\ pb@cs.wisc.edu}}

\maketitle 

\begin{abstract}

Identifying anomalies and contamination in datasets is important in a wide variety of settings.  In this paper, we describe a new technique for estimating contamination in large, discrete valued datasets.  Our approach considers the normal condition of the data to be specified by a model consisting of a set of distributions.  Our key contribution is in our approach to contamination estimation.  Specifically, we develop a technique that identifies the minimum number of data points that must be discarded ({\em i.e.,} the level of contamination) from an empirical data set in order to match the model to within a specified \emph{goodness-of-fit}, controlled by a $p$-value.  Appealing to results from large deviations theory, we show a lower bound on the level of contamination is obtained by solving a series of convex programs. Theoretical results guarantee the bound converges at a rate of $O(\sqrt{\log(p)/p})$, where $p$ is the size of the empirical data set.  

\end{abstract}

\vspace{.1cm}
\noindent \small \textit{Index terms:}  contamination estimation, anomaly detection, entropy minimization, discrete goodness-of-fit testing.
\normalsize

\section{Introduction}

Anomalies in datasets are typically associated with unexpected or unwanted characteristics such as contamination, noise or outliers that deviate significantly from expectations. The ability to detect anomalies and accurately estimate contamination in datasets is important in a wide variety of domains including healthcare, astronomy, environmental and materials sciences.   
The context that motivates our work is detecting anomalies and estimating contamination in datasets collected from communication and computer systems.  Specific applications of anomaly detection in these datasets include network management and Internet security broadly defined.  Communication and Internet measurement datasets have several distinguishing characteristics including the potential for extreme scale and high dimensionality.

The standard framework for anomaly detection is based on establishing a baseline for {\em normal} ({\em e.g.,} in a distributional sense) and then setting a threshold which if exceeded identifies an anomaly.  The goal in establishing norms and thresholds is to identify anomalies with low false alarm rates.  There is an extensive literature on methods for anomaly detection (see related work in Section~\ref{sec:related}). 

In this paper we describe a new method for anomaly detection which is based on estimating the level of contamination in a dataset.  An anomaly is declared if a dataset has an elevated level of contaminate. We consider the contamination-free ({\em i.e.,} normal) condition of a dataset to be specified by a model comprised of a set of distributions.  We then compare the model to the distributional profile of a target dataset collected over a specified period.  A standard method for comparing datasets in this way is goodness of fit (GoF) testing~\cite{d1986goodness}.  
To the best of our knowledge, this paper is the first to address the problem of contamination estimation using GoF testing based on entropy minimization, as we define in Section \ref{sec:probstate}.  

The approach we develop is based on answering the following question.  Given a model consisting of a family of distributions, a specified $p$-value, and an empirical dataset, what is the minimum number of data points that must be discarded so that the empirical distribution of the data matches a member model distribution (in terms of GoF for a specified $p$-value)?  This is akin to finding the largest subset of the original dataset which has an empirical distribution \emph{close} to the model. 
We show that this question can be efficiently answered by solving a series of convex optimizations.  Solving the optimizations results in a lower bound on the minimum number of data points that are attributed to a contaminate.   In the simplest case, each convex optimization is an inequality constrained entropy minimization problem (whose dual is a constrained geometric program) which can be solved in real time and at scale for many applications.  More generally, the approach can be applied to any setting in which the model consists of a convex set of distributions.  Two specific instances which we discuss are \emph{1)} models defined by any number of distributions with arbitrary mixture proportions, and \emph{2)} models defined by the set distributions with small Kullback-Leibler (KL) divergence to a specified distribution, which arises when the model itself is generated from a finite amount of data.   Lastly, we show the lower bound output by the optimization converges to an upper bound known as the separation distance at a rate of $O(\sqrt{\log ( p)/p  })$, where $p$ is the number of data points.

\section{Quantifying Contamination} \label{sec:prob_statement}

\subsection{Notation}

Let $P \in  \mathbb{R}^n$ and $Q \in \mathbb{R}^n$ denote probability mass functions over $n$ categories, with elements $P_i$, $i=1,\dots,n$ and $Q_i$, $i=1,\dots,n$.  Throughout, $P$ denotes the distribution under test, $Q$ denotes a member distribution of the model, $Q^0$ denotes the `true' unknown model distribution, and $Q^j$ indexes multiple distributions. The empirical distribution of a sequence of random variables $X=X_1, \dots, X_p \in \mathcal{X}^p$ is the relative proportion of occurrences of each element of $\mathcal{X}$ in $X$.  Specifically, let $\mathcal{X}=: \left\{x_1, x_2, \dots, x_n \right\}$ and define $p_i = \sum_{j=1}^p \mathbf{1}_{\left\{X_j=x_i \right\}} $ for $i=1,\dots,n$.  Then 
$\widehat{P}(X) =  \frac{1}{p}\left\{ p_1, p_2, \dots, p_n \right\}.$  
$\mathbb{P}_{Q}(\cdot)$ denotes probability measure with respect to distribution $Q$.  For simplicity of notation, we write
$\mathbb{P}_{{Q}}(\{\widehat{P}^1,\widehat{P}^2\})$ as short hand for $\mathbb{P}_{{Q}}\left(\left\{X \in \mathcal{X}^p :  \widehat{P}(X) \in \{ \widehat{P}^1,\widehat{P}^2 \} \right\} \right)$. 
The Kullback-Leibler divergence between two distributions is defined in the usual manner,
\begin{align*}
D(P||Q) := \sum_{i} P_i \log \left( \frac{P_i}{Q_i}\right).
\end{align*}
$D(P||Q)$ is a jointly convex function in $P$ and $Q$.  The minimum entropy set, $\left\{P: D(P||Q) \leq \epsilon \right\}$, is a convex set (for a fixed $Q$, $\epsilon$).   Lastly, let $\mathbb{S}^n$ denote the probability simplex:
\begin{align} \nonumber
\mathbb{S}^n := \left\{P \in \mathbb{R}^n: \sum_i P_i =1, \ P_i \geq 0 \quad i=1, \dots, n \right\}.
\end{align}

\subsection{Quantifying Contamination} \label{sec:probstate}

Consider a set of model distributions $\mathcal{Q}$ whose elements are supported over a finite number of categories $\mathcal{X}$ with $|\mathcal{X}|=n$.  For example, $\mathcal{Q}$ could be set of minimum entropy distributions, or a mixture distribution, $Q = \sum_{j=1}^\ell \pi_j Q^{j}$, 
where $\pi_1,\dots,\pi_{\ell}$ are unknown ($\mathcal{Q}$ is the set of all such mixture distributions). Let $X \in \mathcal{X}^p$ denote a collection of samples.  An unknown subset of the samples consists of $i.i.d.$ draws from an unknown distribution $Q \in \mathcal{Q}$.     The remaining samples, $\mathcal{C} \subset \lbrack p \rbrack $, are generated by some other means, and correspond to \emph{contaminated} samples.    
This paper is concerned with lower bounding the size of the contaminating set $\mathcal{C}$ given the set of model distributions $\mathcal{Q}$, a specified significance level (a $p$-value), and the observed samples $X_1,\dots, X_p$.

Intuitively, if the empirical distribution of a sequence of random variables is \emph{close} to the model distribution in terms of GoF, we conclude the sequence is \emph{not} contaminated. To quantify this intuition, we define a set of \emph{typical} empirical distributions based on statistical significance; we note this definition is distinct from the usual definitions of \emph{strongly} and \emph{weakly} typical, and making this connection is a contribution herein. 

\begin{definition}{Typical}.  Let $\widehat{P}^{1}, \widehat{P}^{2}, \dots$ be an ordering on all empirical distributions (of $p$ samples and $n$ categories) such that $\mathbb{P}_{{Q}}(\widehat{P}^{1}) \leq \mathbb{P}_{Q}(\widehat{P}^{2}) \leq \dots$.   A sequence of random variables $X$ with $\widehat{P}(X) = \widehat{P}^{\ell}$ is \emph{typical} at significance level $\epsilon$ with respect to $\mathcal{Q}$  iff
\begin{align} \label{eqn:atypical}
\sup_{Q \in \mathcal{Q}}  \mathbb{P}_Q\left(\left\{\widehat{P}^{1}, \widehat{P}^{2}, \dots, \widehat{P}^{\ell-1}, \widehat{P}^{\ell}\right\}\right) \geq \epsilon
\end{align}
for any such ordering\footnote{Note the ordering is an implicit function of $Q$; we suppress this for simplicity of notation.}.
\end{definition}
The definition implies a sequence of random variables $X$ is typical if the probability of the empirical distribution of $X$ \emph{or any less likely empirical distribution} is more than a specified significance level.
Note $\epsilon$ is interpreted as a $p$-value; as $\epsilon$ approaches zero, all sequences become typical (requiring stronger evidence to reject the null hypothesis).  As $\epsilon$ increases, fewer sequences are typical.

\begin{definition}{Contaminated.} 
We say $X$ is \emph{contaminated} iff $X$ is not typical (with respect to $\mathcal{Q}$ and with significance $\epsilon$).  Likewise, an empirical distribution $\widehat{P}(X)$ is \emph{contaminated} iff $X$ is not typical.
\end{definition}

In this paper we study the following question.  Let $X = X_1,\dots, X_p$ be a dataset, and let $X_{\widehat{\mathcal{C} }}= \{X_i :  i \in \widehat{ \mathcal{C}} \ \}$ be any subset of of the original dataset.   What is the smallest set $\widehat{\mathcal{C}} \subset \lbrack p \rbrack $ such that $X_{\lbrack p \rbrack  \setminus \widehat{\mathcal{C}}}$ \emph{is not contaminated}?  Specifically, let
\begin{align*}
c^* = \inf  \left\{ \vert \widehat{\mathcal{C} }   \vert :  X_{\lbrack p \rbrack  \setminus \widehat{\mathcal{C}}} \mbox{ is typical for $(\mathcal{Q}, \epsilon)$}  \right\}.
\end{align*} 
How and under what conditions can one compute $c^*$ efficiently?  Our main focus and insight will be on the continuous approximation to $c^*/p$, denoted $\alpha^*$:
\begin{align*}
\alpha^* = \inf  \left\{ \alpha \in \lbrack 0,1 \rbrack : \exists P \in \mathcal{P}(X,\alpha) \mbox{ typical for $(\mathcal{Q}, \epsilon)$}  \right\}
\end{align*} 
where $\mathcal{P}(X,\alpha)$ is the set of all distributions that can be created by discarding a fraction $\alpha$ of the mass of $\widehat{P}(X)$ (see Sec. \ref{sec:ConvRel}):
\begin{align} \label{eqn:pram}
\mathcal{P}(X,\alpha) = \left\{P \in \mathbb{S}^n :P_i \leq \frac{\widehat{P}_i(X)}{1-\alpha}  \quad  i=1,\dots,n \right\}.
\end{align}
Throughout, $\alpha$ is a key parameter that represents the fraction of the dataset attributed to contamination; $\alpha^*$ represents the smallest $\alpha$ such that there exists a subset of the original data of size $p(1- \alpha)$ that is \emph{not} contaminated.  If $\alpha^* = 0$, the original dataset is not contaminated; if $\alpha^* = 1$, the entire dataset must be attributed to contamination.


 \subsection{Separation Distance}  
We assume $X_i \overset{i.i.d.} \sim Q^{0}$ for all $i \not \in \mathcal{C}$.  For $X_i$, $i \in \mathcal{C}$, no assumption is made.  This agnostic approach has inherent limitations.  In the extreme case the distribution of the contaminated data could exactly follow that of the model. Here, the distribution of the full dataset should closely match the model, and be indistinguishable from the setting where $\mathcal{C}$ is empty.   No contamination should be reported to within the significance level (in $m$ realizations of $X^p$, we expect $c^* \neq  0$ fewer than $m \epsilon$ times).

A more interesting scenario is when the empirical distribution of the full dataset converges to a \emph{distinct} distribution \emph{i.e.}, $\widehat{P}(X^p) \rightarrow P \neq Q^{0}$.  In the case that $\mathcal{Q} = \left\{ Q^0\right\}$, a consistent estimator will report non-zero contamination for large $p$.  $P$ can be written as a mixture distribution, and we are interested in reporting the smallest $\kappa$ such that $(1-\kappa) Q^{0} + \kappa F = P$
for \emph{any} distribution $F$.  $F$ represents the contaminating distribution, and $\kappa$ the proportion of the samples which are drawn from $F$.   This minimum value of $\kappa$ is known as the \emph{separation distance}~\cite{aldous1987strong} between $P$ and $Q^{0}$, written succinctly as 
\begin{align*}
\kappa(P||Q^{0}) = \max_{i \in [n]} \left(1 -  \frac{P_i}{Q_i^{0}}\right).
\end{align*}
In this way, the separation distance between the empirical distribution of the data and model distribution plays an important role in the behavior of $c^*$ and $\alpha^*$ as the sample size grows.   We show as a corollary to later results that $\alpha^*$ is both upper bounded by and converges to $\kappa(\widehat{P}(X)||Q^{0})$  as $p$ grows (see Proposition \ref{prop:12324}  and Theorem \ref{thm:largep}).

 \subsection{Convex Relaxations} \label{sec:ConvRel}
With the exception of problems involving data over only two categories ($n=2$), directly checking if a sample is contaminated is computationally prohibitive, even in the setting where the model consists of a single distribution (when $\mathcal{Q} = \{Q^{0}\}$). Alternatively, using large deviations results, bounds can be derived.  The bound presented below can confirm if a particular dataset is contaminated. The theorem involves the KL divergence between the empirical distribution and a member of $\mathcal{Q}$.  In the case where $\mathcal{Q} = \{Q^{0}\}$, the bound provides a simple way to check if a sample is contaminated at a particular significance level $\epsilon$; in the more general case, if $\mathcal{Q}$ is a convex set, numerical optimization techniques can efficiently check the condition.
 
 \begin{thm}(Outer Bound). \label{thm:outer}
If
\begin{align} \label{eqn:Qouter}
 \inf_{Q \in \mathcal{Q} } D(\widehat{P}(X)||Q)  \geq \frac{1}{p} \log \left( \frac{1}{\epsilon} \right) + \frac{2n}{p} \log (p+1) 
\end{align}
then $X$  is contaminated at significance level $\epsilon$.
\end{thm}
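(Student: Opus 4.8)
The plan is to prove the statement directly by showing that hypothesis~\eqref{eqn:Qouter} forces $X$ to violate the typicality condition~\eqref{eqn:atypical}, i.e. that $\sup_{Q \in \mathcal{Q}} \mathbb{P}_Q(\{\widehat{P}^{1},\dots,\widehat{P}^{\ell}\}) < \epsilon$, which by the definition of \emph{Contaminated} means $X$ is contaminated. First I would fix an arbitrary $Q \in \mathcal{Q}$ and recall that, under the $Q$-induced ordering, the set $\{\widehat{P}^{1},\dots,\widehat{P}^{\ell}\}$ (with $\widehat{P}^{\ell}=\widehat{P}(X)$) is exactly the collection of empirical distributions (types) that are no more likely than $\widehat{P}(X)$ under $Q$. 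The whole argument then reduces to bounding the $Q$-probability of this tail set and afterwards taking a supremum over $\mathcal{Q}$.

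The main tool is the method of types. I would invoke two standard estimates: the counting bound that the number of distinct types of $p$ samples over $n$ categories is at most $(p+1)^n$, so that the tail set contains at most $(p+1)^n$ types; and the two-sided large-deviations estimate for a single type class $T(\widehat{P})$, namely $(p+1)^{-n} e^{-p D(\widehat{P}||Q)} \le \mathbb{P}_Q(T(\widehat{P})) \le e^{-p D(\widehat{P}||Q)}$. Together these let me convert the probability ordering that \emph{defines} typicality into the divergence-based quantity that appears in~\eqref{eqn:Qouter}.

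Assembling the pieces, for each type $\widehat{P}$ in the tail set I would first bound $\mathbb{P}_Q(\widehat{P}) \le e^{-p D(\widehat{P}||Q)}$, then use the ordering $\mathbb{P}_Q(\widehat{P}) \le \mathbb{P}_Q(\widehat{P}(X))$ together with the lower type-class estimate to replace the exponent by $D(\widehat{P}(X)||Q)$ at the cost of one factor $(p+1)^n$, and finally sum over the at most $(p+1)^n$ tail types, yielding $\mathbb{P}_Q(\{\widehat{P}^{1},\dots,\widehat{P}^{\ell}\}) \le (p+1)^{2n} e^{-p D(\widehat{P}(X)||Q)}$. Since $Q$ was arbitrary, taking the supremum over $\mathcal{Q}$ turns the exponent into $-p\inf_{Q}D(\widehat{P}(X)||Q)$, giving $\sup_{Q}\mathbb{P}_Q(\{\widehat{P}^{1},\dots,\widehat{P}^{\ell}\}) \le (p+1)^{2n}e^{-p\inf_{Q}D(\widehat{P}(X)||Q)}$. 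Substituting hypothesis~\eqref{eqn:Qouter} into the exponent makes the right-hand side at most $\epsilon$, placing $X$ below the typicality threshold and completing the argument.

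The step I expect to be the main obstacle is the conversion above: the typicality definition orders empirical distributions by their probability $\mathbb{P}_Q$, whereas the hypothesis is phrased in terms of the divergence $D(\widehat{P}(X)||Q)$, and these two orderings agree only up to the polynomial factors $(p+1)^n$ supplied by the method of types. Keeping that bookkeeping honest is precisely what forces the constant $2n$ (rather than the naive $n$) in~\eqref{eqn:Qouter}. A secondary subtlety is the boundary case: typicality uses a non-strict inequality $\ge \epsilon$, so to declare $X$ contaminated I should verify the tail bound is strictly below $\epsilon$ (or note that the per-type upper estimate $e^{-pD}$ is strict), which I would dispatch with a short remark rather than additional machinery.
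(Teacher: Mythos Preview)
Your proposal is correct and follows essentially the same approach as the paper. The paper packages the argument as Sanov's theorem (contributing one factor $(p+1)^n$ via the type-counting bound) plus a separate lemma converting the probability ordering into a divergence comparison via the two-sided type-class estimate (contributing the second factor $(p+1)^n$); your write-up unrolls these into the same per-type bound and sum, arriving at the identical inequality $\mathbb{P}_Q(\{\widehat{P}^1,\dots,\widehat{P}^\ell\}) \le (p+1)^{2n} e^{-p D(\widehat{P}(X)||Q)}$ and then taking the supremum over $\mathcal{Q}$.
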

\begin{proof} See Appendix A.
\end{proof}
Theorem \ref{thm:outer} is an outer bound; any empirical distribution with KL distance \emph{greater} than the stated quantity (from \emph{all} elements in $\mathcal{Q}$) \emph{is} contaminated.  
Theorem \ref{thm:outer} can be used to bound the size of the smallest set $\mathcal{C} \subset \lbrack p \rbrack $ such that $X_{\lbrack p \rbrack  \setminus \mathcal{C}}$ \emph{is not contaminated}. This is simplified if $\mathcal{Q}$ consists of a single model distribution; we first discuss this scenario.  In principle, given a dataset $X \in \mathcal{X}^p$ and a model distribution $Q^{0}$, one could first check if $X$ is contaminated by evaluating (\ref{eqn:Qouter}).  If (\ref{eqn:Qouter}) holds, $X$ is contaminated, and an immediate question follows -- how many and which data points must be excluded so that (\ref{eqn:Qouter}) no longer holds?   A exhaustive approach to answer this question would be the following.  For each $x_i \in \mathcal{X}$, discard a single data point that takes the value $x_i$, and recalculate the empirical distribution with the data point removed.  Of the $n$ new empirical distributions, check if the one with minimum KL divergence to the model distribution still satisfies (\ref{eqn:Qouter}). 

If (\ref{eqn:Qouter}) still holds for all possible empirical distributions with one data point removed, check all distinct empirical distributions that can be created by discarding 2 data points (roughly $n^2$ possibilities, provided each $x_i$ appears at least twice in the data).  Continuing in this manner, one would check each of the $\sim n^m$ possible empirical distributions that can be created by discarding $m$ data points.  When (\ref{eqn:Qouter}) is first violated, $m$ lower bounds the minimum number of data points that must be excluded to match the model.  We can interpret this as a series of integer programs.  For $m=0,\dots, p$ define $D^*_m$ as the solution to  
\begin{equation}
\begin{aligned} \label{eqn:int}
& \underset{m_1, m_2,\dots,m_n \in \mathbb{N}^n}{\text{minimize}}
& & \sum_{i=1}^n \frac{p_i - m_i}{p-m}  \log \left(\frac{\frac{p_i - m_i}{p-m}}{Q^{0}_i}  \right) \\
& \text{subject to} & & \sum_i m_i = m  \\
& & & m_i \leq p_i \qquad  i = 1, \ldots, n
\end{aligned}
\end{equation}
where $p_i$ is the number of times $x_i$ appears in the original dataset $X$. The optimization variables, $m_i$, represent the number of samples to discard corresponding to a particular $x_i$.   Note that the objective is the KL divergence between the \emph{new} empirical distribution (with $m$ samples removed) and the known distribution $Q^{0}$.   
The value of $D^*_m$ can be checked in Theorem \ref{thm:outer}, providing conditions under which one can find a set $|\mathcal{C}| =m$ such that $X_{\lbrack p \rbrack  \setminus \mathcal{C}}$ is not contaminated.  This gives a bound on $c^*$.  Specifically, 
\begin{align*} 
c^* \geq  \max \left\{m: D^*_m \geq \frac{1}{p-m} \log \left( \frac{1}{\epsilon} \right) \right. \hspace{1cm} \\
\left. + \frac{2n}{p-m} \log (p-m+1)  \right\}. \hspace{-.9cm}  \nonumber
\end{align*}
Note that the condition in Theorem \ref{thm:outer} will always be met for some $m$; in particular, for $m=p$, by convention $D_0^*=0$, implying that the empty set, $X_{ \{ \} }$, is not contaminated. 

The optimization in (\ref{eqn:int}) is an integer program over a subset of $\mathbb{N}^n$.  
To efficiently solve the optimization, we can translate the integer valued variables to their continuous counterparts; specifically, let $\widehat{P}_i=  p_i/p$, be the original empirical distribution, and  $\alpha = m/p$ represent the fraction the total samples discarded.  Making these substitutions results in a convex entropy minimization problem:
\begin{equation} \label{eqn:contop1}
\begin{aligned} 
& \underset{P \in \mathbb{S}^n}{\text{minimize}}
& & \sum_i P_i  \log \left(\frac{P_i}{Q_i^{0}} \right) \\
& \text{subject to} & &   P_i \leq \frac{\widehat{P}_i}{1-\alpha}  \qquad  i = 1, \ldots, n 
\end{aligned}
\end{equation}
where $\alpha \in \lbrack 0,1 \rbrack$ represents the fraction of samples removed. 

More generally, $\mathcal{Q}$ is a set of distributions.  The same continuous approximation results in a joint optimization over the model space $\mathcal{Q}$ and the space of empirical distributions, $\mathcal{P}(X,\alpha)$ defined in (\ref{eqn:pram}).   Formally, let $D^*_\alpha$ be given as
\begin{equation} \label{eqn:contop}
\begin{aligned} 
D^*_\alpha = & \underset{P \in \mathcal{P}(X,\alpha), Q \in \mathcal{Q}}{\text{min}}
& & \sum_i P_i  \log \left(\frac{P_i}{Q_i} \right). \\
\end{aligned}
\end{equation}
If  $\mathcal{Q}$ is a convex set, the above optimization can be efficiently solved in many settings (see Sec. \ref{sec:minentro}).

To answer our original question and bound $\alpha^*$, one can conduct a line search over $\alpha \in \lbrack 0,1 \rbrack$, repeatedly solving the above optimization, and checking the output value of $D^*_\alpha$ against Theorem \ref{thm:outer}. 
This is captured in the following proposition. 
\begin{proposition} \label{prop:12324}
Let
\begin{align} \label{eqn:alp_ub}
\alpha_{\mathrm{L}} =  \max \left\{\alpha: D^*_\alpha \geq \frac{1}{p(1-\alpha)} \log \left( \frac{1}{\epsilon} \right) \right.   \hspace{1cm}  \\
 \left. + \frac{2n}{p(1-\alpha)} \log \left(p(1-\alpha)+1\right)  \right\}   \hspace{-.3cm}   \nonumber
\end{align}
then $\alpha_{\mathrm{L}}\leq \alpha^*$.
\end{proposition}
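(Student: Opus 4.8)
The plan is to recognize that the inequality defining $\alpha_{\mathrm{L}}$ in (\ref{eqn:alp_ub}) is exactly the hypothesis of the Outer Bound (Theorem~\ref{thm:outer}) read at the reduced sample size $p(1-\alpha)$, and that when it holds \emph{every} distribution reachable by discarding an $\alpha$-fraction of the mass is contaminated. Since $\alpha^*$ is the smallest fraction for which \emph{some} reachable distribution is typical, any $\alpha$ satisfying the inequality must lie below $\alpha^*$, and taking the maximum gives the claim.

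First I would record a monotonicity observation. The feasible set $\mathcal{P}(X,\alpha)=\{P\in\mathbb{S}^n : P_i \le \widehat{P}_i(X)/(1-\alpha)\}$ in (\ref{eqn:pram}) is nondecreasing in $\alpha$, because $1/(1-\alpha)$ increases with $\alpha$ so the constraints loosen. Consequently, if some $P\in\mathcal{P}(X,\alpha)$ is typical, that same $P$ stays feasible and typical for every $\alpha'\ge\alpha$; hence $\{\alpha:\exists\,P\in\mathcal{P}(X,\alpha)\text{ typical}\}$ is an up-set of the form $[\alpha^*,1]$. It therefore suffices to show that every $\alpha$ meeting the inequality in (\ref{eqn:alp_ub}) admits \emph{no} typical $P\in\mathcal{P}(X,\alpha)$, which places such $\alpha$ strictly below $\alpha^*$.

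Next, fix an admissible $\alpha$. By the definition of $D^*_\alpha$ in (\ref{eqn:contop}) as the joint minimum of $\sum_i P_i\log(P_i/Q_i)$ over $P\in\mathcal{P}(X,\alpha)$ and $Q\in\mathcal{Q}$, it is the smallest value of $\inf_{Q\in\mathcal{Q}}D(P||Q)$ across feasible $P$, so every individual $P\in\mathcal{P}(X,\alpha)$ satisfies $\inf_{Q\in\mathcal{Q}}D(P||Q)\ge D^*_\alpha$. Chaining this with the defining inequality yields, for each such $P$, the bound $\inf_{Q\in\mathcal{Q}}D(P||Q)\ge \frac{1}{p(1-\alpha)}\log(1/\epsilon)+\frac{2n}{p(1-\alpha)}\log(p(1-\alpha)+1)$, which is precisely the premise of Theorem~\ref{thm:outer} with the sample size $p$ replaced by the post-discard size $p(1-\alpha)$. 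Invoking the theorem at this effective size shows every $P\in\mathcal{P}(X,\alpha)$ is contaminated, so no typical $P$ exists; combined with the previous paragraph this gives $\alpha<\alpha^*$, and maximizing over admissible $\alpha$ delivers $\alpha_{\mathrm{L}}\le\alpha^*$.

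The delicate step is the last substitution. Theorem~\ref{thm:outer} is stated for the empirical distribution of a genuine sample of size $p$, whereas here it is applied with effective size $p(1-\alpha)$ to every member of the continuous relaxation $\mathcal{P}(X,\alpha)$, not merely to the discrete empirical distributions obtained by physically removing $p\alpha$ points. I would justify this by noting that discarding $m=p\alpha$ points leaves a bona fide sample of size $p(1-\alpha)$ whose empirical distribution lies in $\mathcal{P}(X,\alpha)$, so the large-deviation bound underlying Theorem~\ref{thm:outer} applies to it verbatim; moreover, passing from the integer program (\ref{eqn:int}) to the continuous program (\ref{eqn:contop}) only enlarges the feasible set and can therefore only \emph{lower} $D^*_\alpha$, so requiring the relaxed value to exceed the threshold is the more conservative condition and still certifies contamination of all achievable sub-samples. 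Making precise that this relaxation preserves the inequality in the correct direction — so that the continuous minimizer genuinely rules out typicality of every reachable sample — is the main point requiring care.
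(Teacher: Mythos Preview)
Your proof is correct and follows the same approach as the paper's own two-sentence argument: apply Theorem~\ref{thm:outer} at the reduced sample size $p(1-\alpha)$ to conclude that every $P\in\mathcal{P}(X,\alpha)$ is contaminated whenever the defining inequality holds, so no such $\alpha$ can reach a typical distribution and hence $\alpha_{\mathrm{L}}\le\alpha^*$. You are in fact more careful than the paper, explicitly recording the monotonicity of $\mathcal{P}(X,\alpha)$ in $\alpha$ and flagging the passage from the discrete sub-sampling to the continuous relaxation---points the paper's sketch leaves implicit.
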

\begin{proof}
The proof follows directly from Theorem \ref{thm:outer}.  For any $\alpha$ such that the condition on $D^*_\alpha$ in (\ref{eqn:alp_ub}) holds, by Theorem \ref{thm:outer}, any distribution in $\mathcal{P}(X,\alpha)$ is contaminated.   We note that $\alpha_{\mathrm{L}}$ always exists by monotone properties of $D_\alpha^*$ and the right hand side of the conditional in (\ref{eqn:alp_ub}).  See Appendix B, Theorem \ref{thm:largep} for details.  
\end{proof}

Fig. \ref{fig:geo} shows a geometric interpretation of Proposition \ref{prop:12324} and the optimization in (\ref{eqn:contop1}).  See the caption for details.



The lower bound obtained by solving the series of optimization problems converges to the separation distance, captured by the following theorem.
\begin{thm}  \label{thm:conv} Let $\mathcal{Q} = \{ Q^{0} \}$.  Fix $\widehat{P}(X)$.  Then    
 \begin{align*}
\kappa(\widehat{P}||Q^{0}) -  \alpha_{\mathrm{L}} = O\left(\sqrt{\frac{\log p}{p} } \right).
 \end{align*}
\end{thm}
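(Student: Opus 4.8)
The plan is to exploit the fact that, for a single model distribution, the minimum-divergence value $D^*_\alpha$ in (\ref{eqn:contop1}) vanishes exactly at the separation distance and grows quadratically as $\alpha$ drops below it; feeding this growth rate into the definition of $\alpha_{\mathrm{L}}$ then yields the claimed rate. First I would record the geometric characterization: $Q^{0}\in\mathcal{P}(X,\alpha)$ if and only if $Q^{0}_i\le \widehat{P}_i/(1-\alpha)$ for every $i$, i.e. if and only if $1-\alpha\le \min_i \widehat{P}_i/Q^{0}_i = 1-\kappa(\widehat{P}||Q^{0})$, which holds precisely when $\alpha\ge\kappa(\widehat{P}||Q^{0})$. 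Hence $D^*_\alpha=0$ for $\alpha\ge\kappa$, and since enlarging $\alpha$ enlarges the feasible set $\mathcal{P}(X,\alpha)$, the map $\alpha\mapsto D^*_\alpha$ is non-increasing; the whole problem thus reduces to lower bounding $D^*_\alpha$ for $\alpha$ slightly below $\kappa$.

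Writing $\kappa=\kappa(\widehat{P}||Q^{0})$ and letting $i^*$ attain the maximum defining $\kappa$, so that $\widehat{P}_{i^*}=(1-\kappa)Q^{0}_{i^*}$, the key step is a one-coordinate lower bound. For $\alpha=\kappa-\delta$ with $0<\delta\le\kappa$, every feasible $P\in\mathcal{P}(X,\kappa-\delta)$ obeys
\begin{align*}
Q^{0}_{i^*}-P_{i^*}\ \ge\ Q^{0}_{i^*}-\frac{\widehat{P}_{i^*}}{1-\kappa+\delta}\ =\ Q^{0}_{i^*}\,\frac{\delta}{1-\kappa+\delta}\ \ge\ Q^{0}_{i^*}\,\delta ,
\end{align*}
so $\|P-Q^{0}\|_1\ge |P_{i^*}-Q^{0}_{i^*}|\ge Q^{0}_{i^*}\delta$. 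Pinsker's inequality then gives $D(P||Q^{0})\ge \tfrac12\|P-Q^{0}\|_1^2\ge c\,\delta^2$ with $c=(Q^{0}_{i^*})^2/2>0$, uniformly over feasible $P$; hence $D^*_{\kappa-\delta}\ge c\,\delta^2$. This quadratic bound is of the right order, matching the second-order Taylor/Fisher-information expansion of the divergence around $Q^{0}$, so it will produce the sharp exponent.

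Finally I would compare against the threshold in (\ref{eqn:alp_ub}); denote by $R(\alpha)$ its right-hand side. For $\delta$ small the denominator $1-\kappa+\delta$ is bounded away from $0$, so $R(\kappa-\delta)\le C\log p/p$ for a constant $C=C(\epsilon,n,\kappa)$ and all large $p$. Choosing $\delta=\sqrt{(C/c)\,\log p/p}$ makes $c\,\delta^2=C\log p/p\ge R(\kappa-\delta)$, so $D^*_{\kappa-\delta}\ge R(\kappa-\delta)$; by the definition of $\alpha_{\mathrm{L}}$ as the largest $\alpha$ satisfying this inequality (existence guaranteed by Proposition \ref{prop:12324}), this forces $\alpha_{\mathrm{L}}\ge\kappa-\delta$. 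Since also $\alpha_{\mathrm{L}}\le\kappa$, we conclude $0\le\kappa-\alpha_{\mathrm{L}}\le\delta=O(\sqrt{\log p/p})$.

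I expect the main obstacle to be the local-growth lower bound of the second paragraph: one must show the quadratic behavior holds uniformly over all feasible $P$ (not merely at the optimizer) and handle the degeneracies — ties among the maximizers of $\kappa$ (any single maximizer $i^*$ suffices), the requirement $Q^{0}_i>0$ so that $\kappa$ and $c$ are well defined, and the boundary case $\kappa=1$, where $\alpha_{\mathrm{L}}$ and the bound are trivial. The Pinsker route sidesteps a more delicate direct analysis of the entropy-minimization optimizer while still delivering the correct $\delta^2$ order, which is exactly what fixes the $\sqrt{\log p/p}$ rate.
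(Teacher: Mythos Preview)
Your argument is correct and takes a genuinely different route from the paper's. The paper first solves the constrained entropy minimization (\ref{eqn:contop1}) exactly via the KKT conditions for $\alpha$ near $\kappa$, obtaining the closed-form optimizer $P^*$ in which only the coordinate $\ell$ achieving the separation distance is tight; this reduces $D^*_\alpha$ to a binary KL divergence between $\widehat{P}_\ell/(1-\alpha)$ and $Q^0_\ell$, which is then related to the threshold in (\ref{eqn:alp_ub}). You instead bypass the optimizer entirely: the single-coordinate observation $Q^0_{i^*}-P_{i^*}\ge Q^0_{i^*}\delta$ for every feasible $P$, combined with Pinsker's inequality, yields the uniform lower bound $D^*_{\kappa-\delta}\ge \tfrac12(Q^0_{i^*})^2\delta^2$ directly. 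Your route is shorter and more robust --- it requires no Lagrangian analysis, no verification that $\alpha_{\mathrm L}$ lies in the range where the closed form is valid, and no separate handling of ties among the maximizing indices --- at the cost of a slightly looser constant (Pinsker gives $(Q^0_{i^*})^2/2$, whereas the exact binary-KL form implicitly yields the larger Fisher-information constant $1/\bigl(2Q^0_{i^*}(1-Q^0_{i^*})\bigr)$). Since only the order $O(\sqrt{\log p/p})$ is claimed, this loss is immaterial, and your lower-bound direction on $D^*_\alpha$ is exactly what the definition of $\alpha_{\mathrm L}$ requires.
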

\begin{proof}
See Appendix B.
\end{proof} 
Theorem \ref{thm:conv} is stated for a fixed $\widehat{P}(X)$, although one would in general assume $\widehat{P}(X)$ to be an implicit function of $p$.  The reason for fixing $\widehat{P}(X)$ is both generality and simplicity.  The assumption decouples randomness from the convergence rate of the upper bound and the lower bound produced the optimization; without this assumption, the upper and lower bounds would be random variables, and necessitate a probabilistic statement.  
We also note that a precise limit statement can be readily extracted from the proof.

\begin{figure}
\centering
\includegraphics[width=.9\columnwidth]{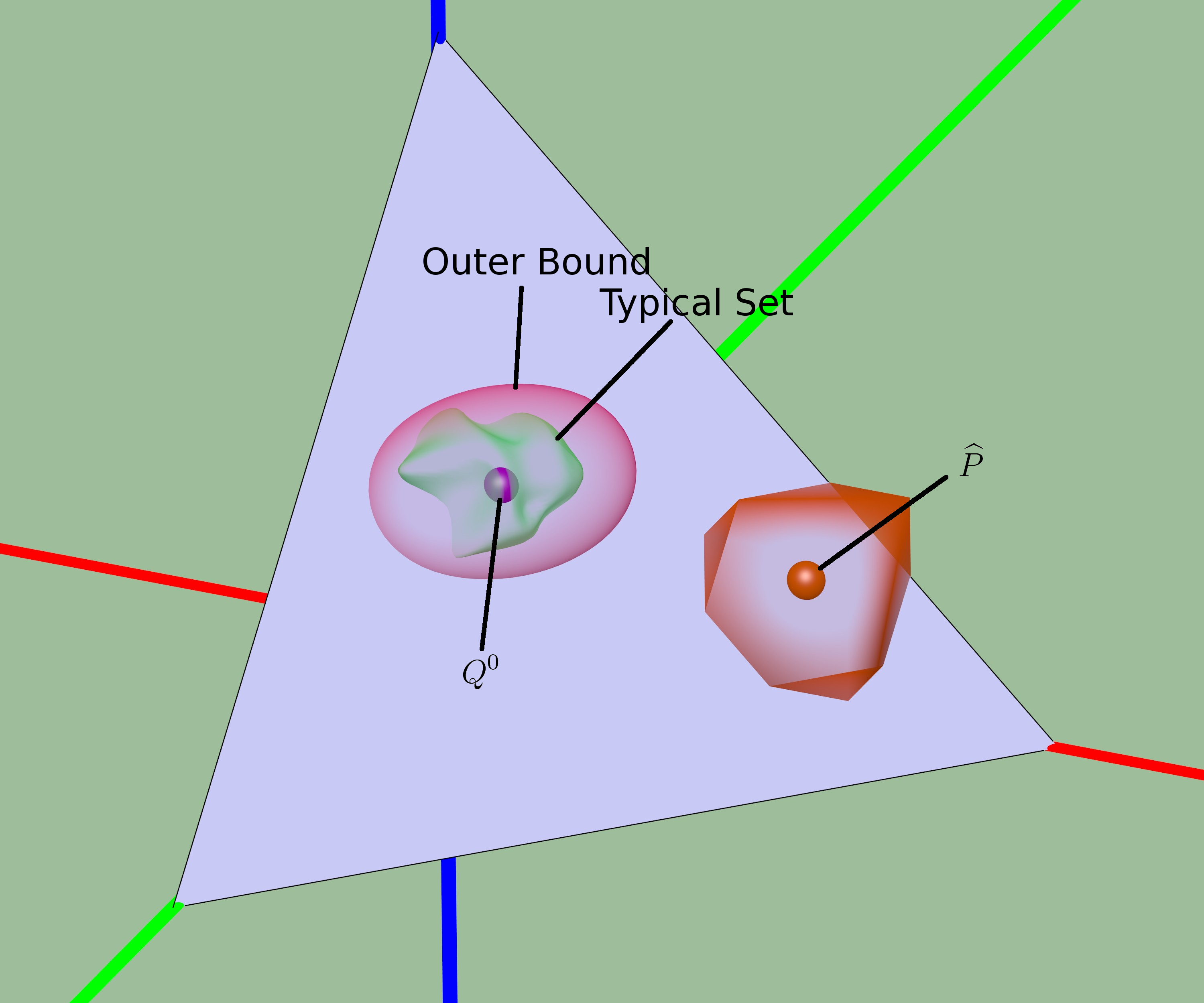}
 \caption{Geometric interpretation of Proposition \ref{prop:12324} and the optimization in (\ref{eqn:contop1}) with $\mathcal{Q}= \left\{ Q^0\right\}$. The width of the hypercube around $\widehat{P}$ is $\alpha$.  As $\alpha$ is increased, the hypercube eventually intersects the `outer bound' set, which represents the set of distributions closest to ${Q}^{0}$ in KL divergence; the sets intersect when $\alpha =\alpha_{\mathrm{L}}$.     Note that the `outer' bound set also increases in size as $\alpha$ increases.    \label{fig:geo}}
  \label{fig:3d}
\end{figure}

\subsection{Discussion} \label{sec:mix} \label{sec:minentro}
In practice, it is often the case that the precise model distribution is not known; instead, it may be known that the model distribution comes from some family of distributions.  This arises in anomaly detection when normal events are known to correspond to unknown proportions of samples from a finite set of distributions.  This is the case of the mixture model \emph{i.e.}, $\mathcal{Q}$ is the set of all distributions that can be represented as $ Q = \sum \pi_j Q^{j}$ for any mixture proportions $\pi_j$.  As the set of mixture distributions with unknown mixture components is a convex set, we can directly address this setting using the developments of Sec. \ref{sec:ConvRel}.
Jointly optimizing over the mixture weights and the mixture distribution, the optimization takes the form 
\begin{equation}
\begin{aligned} \label{eqn:mixmod}
& \underset{P \in \mathcal{P}^n, \ \pi \in \mathbb{S}^k }{\text{minimize}}
& & \sum_i P_i  \log \left(\frac{P_i}{\sum_{j=1}^k \pi_j Q_{i}^{j}} \right). \\
\end{aligned}
\end{equation}
We note that the above optimization can be solved at scale in real time for many applications; see discussions of numerical experiments below for details.


For many applications, model distributions are generated using a \emph{finite} amount of data from known good sources (\emph{i.e.}, sources that are known to have no contamination).   Let $\widehat{Q}$ be an empirical distribution generated from $p'$ samples of an $i.i.d.$ population, and consider the set 
\begin{eqnarray*}
\mathcal{Q}' = \left\{Q: \widehat{Q} \mbox{ is typical for } (\{Q\}, \epsilon) \right\}.
\end{eqnarray*}
Here, $\mathcal{Q}$ is the set of all distributions that have $\widehat{Q}$ as a typical empirical distribution.  As before, determining membership in $\mathcal{Q}$ is intractable for large $p'$ and more than two categories.   Let 
\begin{align*} 
\bar{\mathcal{Q}} = \left\{Q : D(\widehat{Q}||Q) \leq \frac{1}{p'} \log \left(\frac{1}{\epsilon} \right) + \frac{2n}{p'} \log(p'+1)\right\}.
\end{align*}
$\bar{\mathcal{Q}}$ satisfies two important properties.  First, $\mathcal{Q} \subseteq \bar{\mathcal{Q}}$ by Theorem \ref{thm:outer} and second, $\bar{\mathcal{Q}}$ is a convex set. 

Solving the optimization in (\ref{eqn:contop}) with $\mathcal{Q} = \bar{\mathcal{Q}}$ provides a powerful result which we state in the following proposition. 
\begin{proposition} \label{prob:last}
Consider two empirical distributions $\widehat{P}$ and $\widehat{Q}$.  Let $\mathcal{Q} = \bar{\mathcal{Q}}$, defined above, and let $D_0^*$ be the solution to the optimization in (\ref{eqn:contop}) with $\alpha = 0$.  If
\begin{align*}
D^*_0 \geq  \frac{1}{p} \log \left( \frac{1}{\epsilon} \right) + \frac{2n}{p} \log \left(p+1\right),
\end{align*}
there is no $Q$ that simultaneously satisfies \emph{1)} $\widehat{Q}$ is typical with respect to $Q$
and \emph{2)} $\widehat{P}$ is typical with respect to $Q$. 
\end{proposition}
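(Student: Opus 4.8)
The plan is to argue by contradiction, leveraging the containment $\mathcal{Q}' \subseteq \bar{\mathcal{Q}}$ already noted before the statement together with Theorem~\ref{thm:outer} specialized to a single model distribution. First I would unpack the meaning of $D^*_0$. Setting $\alpha = 0$ in the feasible set (\ref{eqn:pram}) forces $P_i \leq \widehat{P}_i$ for every $i$; since both $P$ and $\widehat{P}$ lie in $\mathbb{S}^n$ and hence sum to one, these inequalities can only hold with equality, so $\mathcal{P}(X,0) = \{\widehat{P}\}$. Consequently the joint optimization (\ref{eqn:contop}) collapses to a minimization over $Q$ alone, and
\begin{align*}
D^*_0 = \min_{Q \in \bar{\mathcal{Q}}} D(\widehat{P}||Q).
\end{align*}

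Next I would record the two facts that drive the argument. The set $\mathcal{Q}' = \{Q : \widehat{Q} \text{ is typical for } (\{Q\},\epsilon)\}$ satisfies $\mathcal{Q}' \subseteq \bar{\mathcal{Q}}$: indeed, if $Q \notin \bar{\mathcal{Q}}$ then $D(\widehat{Q}||Q)$ exceeds the threshold in the definition of $\bar{\mathcal{Q}}$, so Theorem~\ref{thm:outer} applied with model set $\{Q\}$ declares $\widehat{Q}$ contaminated with respect to $Q$, i.e. $Q \notin \mathcal{Q}'$. This is exactly the containment asserted in the text. The second fact is the hypothesis of the proposition, namely $D^*_0 \geq \frac{1}{p}\log(1/\epsilon) + \frac{2n}{p}\log(p+1)$.

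With these in hand the conclusion follows quickly. Suppose, for contradiction, that some $Q$ satisfies both \emph{1)} $\widehat{Q}$ typical with respect to $Q$ and \emph{2)} $\widehat{P}$ typical with respect to $Q$. Property \emph{1)} means $Q \in \mathcal{Q}' \subseteq \bar{\mathcal{Q}}$, so $Q$ is feasible for the minimization defining $D^*_0$; hence $D(\widehat{P}||Q) \geq D^*_0 \geq \frac{1}{p}\log(1/\epsilon) + \frac{2n}{p}\log(p+1)$. Invoking Theorem~\ref{thm:outer} once more with the single-element model set $\{Q\}$, this lower bound on the divergence forces $\widehat{P}$ to be contaminated with respect to $Q$, contradicting property \emph{2)}. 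No such $Q$ can exist.

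Because each ingredient, namely the collapse of the feasible set at $\alpha = 0$, the containment $\mathcal{Q}' \subseteq \bar{\mathcal{Q}}$, and the single-distribution specialization of Theorem~\ref{thm:outer}, is either already established or immediate, I do not anticipate a genuine technical obstacle. The only point requiring care is bookkeeping the direction of the inequalities so that the threshold governing $\widehat{Q}$ (with sample size $p'$) and the threshold governing $\widehat{P}$ (with sample size $p$) are not conflated; the argument routes through $\bar{\mathcal{Q}}$ precisely so that the $p'$-threshold is used only to certify feasibility, while the $p$-threshold is used only to invoke the outer bound on $\widehat{P}$.
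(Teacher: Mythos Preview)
Your argument is correct and is precisely the route the paper intends: the proposition is stated without an explicit proof, but the two ingredients the text highlights just before it---the containment $\mathcal{Q}'\subseteq\bar{\mathcal{Q}}$ (via Theorem~\ref{thm:outer}) and the convexity of $\bar{\mathcal{Q}}$---together with the collapse $\mathcal{P}(X,0)=\{\widehat P\}$ yield exactly your contradiction. Your careful separation of the $p'$-threshold (used only to certify $Q\in\bar{\mathcal{Q}}$) from the $p$-threshold (used to invoke Theorem~\ref{thm:outer} on $\widehat P$) is the right bookkeeping.
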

Satisfying proposition \ref{prob:last} implies that observing a $\widehat{Q}$ and a $\widehat{P}$ generated by the same underlying distribution \emph{by chance} can occur at most a fraction $\epsilon$ of the time;  in this sense, $\widehat{P}$  must be contaminated.  With a single parameter search over $\alpha \in \lbrack 0,1 \rbrack$, the lower bound applies: $\alpha^* \geq \alpha_{\mathrm{L}}$.  We note that the formulation does not require the empirical model and the distribution under test to have joint support.



\begin{figure}
  \centering
  \includegraphics[width=.91\columnwidth]{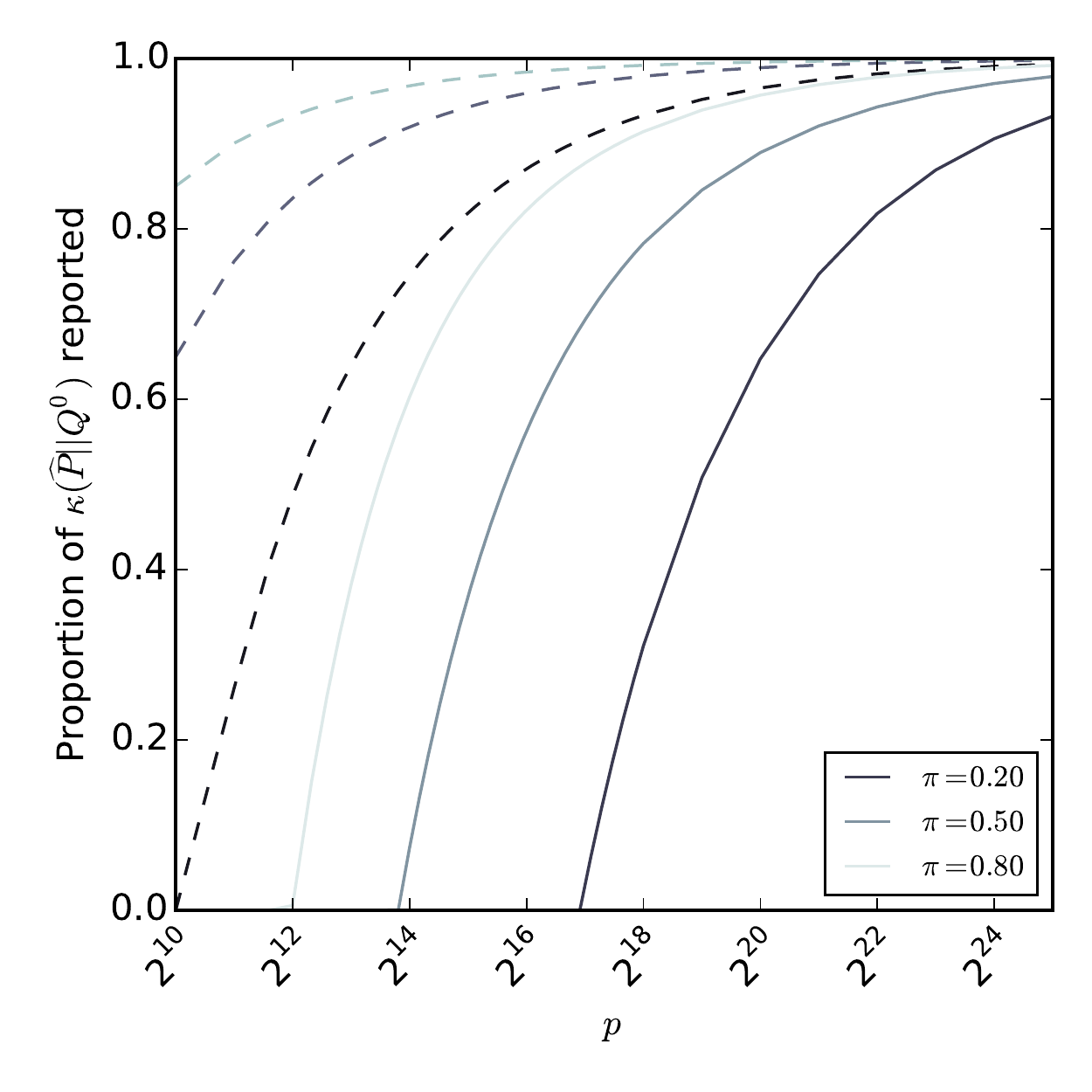}
  \caption{Numerical example. $n=11$, $\epsilon = 0.05$, \(\mathcal{Q} = \{Q^0\} \), with $Q^0$ a uniform distribution over 11 categories.  Solid lines show \(\alpha_\mathrm{L} \) divided by \( \kappa(\widehat{P}||Q^0) \) for mixture distributions \(\widehat P_{\textrm{dip}} = (1 - \pi) \ Q^0 + \pi \mathcal{U}_{10}\), where $\mathcal{U}_{10}$ is a uniform distribution over 10 of the 11 categories.  Dashed lines show 
  \(\alpha_L \) divided by \( \kappa(\widehat{P}||Q^0) \)  for    
\(\widehat P_\textrm{spike} = (1 - \pi)  Q^0 + \pi  \delta\), where $\delta$ is a point mass.  
}
  \label{fig:varyp_normed}
\end{figure}

Numerical experiments were conducted to highlight the utility of Proposition \ref{prop:12324}; results are shown in Fig. \ref{fig:varyp_normed}.  In contrast to the deterministic experiments in Fig. \ref{fig:varyp_normed}, experiments with random samples from various model and test distributions as input were run, showing similar convergence behavior. An experiment with with $\mathcal{Q}$ being a set of 10 mixture distributions with $n=50$ was also conducted.  The line search over $\alpha$ was completed using a bisecting search to an accuracy of $2^{-28}$ (the optimization was solved 27 times for each experiment).  Averaged over 50 trials, the total time to compute $\alpha_{\mathrm{L}}$ was 0.4 seconds.  Experiments were implemented using CVXOPT  \cite{cvxopt} and results visualized with matplotlib \cite{Hunter:2007}.

\section{Related Work} \label{sec:related}
Related work can be broadly classified into traditional work in goodness of fit (GoF) testing, and more recent work in anomaly detection.  
GoF testing has an extensive literature.  When the data are binary valued, and the model distribution Bernoulli,  quantifying contamination using GoF tests can be addressed by evaluating binomial probabilities (a technique known as Fisher's Exact method \cite{mehta1984exact}).  When the data take on more than two values, exact solutions for the level of contamination become intractable. 

A customary approach to GoF testing for categorical data is Pearson's $\chi^2$ test \cite{agresti2014categorical}.  This approach to GoF testing can be quite powerful, but suffers from limitations.  $\chi^2$ tests are approximations, and are known to be invalid under certain conditions.  In particular, the test is invalid when $p_i = 0$ for one or more categories.   Nonetheless, employing the $\chi^2$ test, one can deduce another optimization (much as we do in Sec. \ref{sec:ConvRel}) to answer the aforementioned question; we note the resulting optimization is a separable quadratic program with linear equality constraints which has an analytic solution \cite{bay2010analytic}, and would be an interesting starting point for future work.   Since Pearson's $\chi^2$ test hinges on a normal approximation, this approach would not result in strict contamination bounds.  
More specific to the contamination estimation problem presented here, recent work includes decontamination with multiclass label noise \cite{blanchard2014decontamination, scott2013classification}, which focuses on recovering proportions of a set of mixture distributions present in dataset.     

There is an extensive literature on the related topics of anomaly detection and outlier detection including work employing entropy based techniques, in particular \cite{hero2006geometric} and \cite{gu2005detecting}; we note the formulations here are distinct in that the level of contamination is not estimated.  Lastly, we briefly discuss related work in anomaly detection the areas of computer networks, systems and security as this is the motivation for our developments. Early work on identifying anomalous or unexpected behaviors such faults ({\em e.g.,} due to outages or failures) or spikes ({\em e.g.,} associated with DoS attacks or flash crowds) in computer network traffic was based on the application of graph models, time series and multi-resolution methods {\em e.g.,}~\cite{Feather93,Katzela95,Brutlag00,Barford02}, and Principle Components Analysis (PCA)~\cite{Lakina04,Lakina04a,Lakina05}.  There are significant difficulties in tuning these methods to provide low false alarm rates in practice~\cite{Ringberg07}, necessitating methods based on statistical significance, as presented here.

\bibliographystyle{IEEEtran}
\bibliography{InvalidSampleDetection.bib}

\onecolumn

\section*{Appendix A}
 Proof of Theorem \ref{thm:outer} requires two ingredients, both relying on results from large deviations theory.  The first ingredient is Sanov's Theorem, which we state below.  
\begin{thm}  \label{thm:sanov} (Sanov's Theorem) \cite{cover2012elements} (Theorem 11.4.1).  Let $\mathcal{S}$ be a set of empirical distributions (with $p$ samples over $n$ categories).  Then
\begin{align}
\mathbb{P}_Q(\mathcal{S}) \leq  (p+1)^n \exp\left(-p  \min_{\widehat{P} \in \mathcal{S}}  D(\widehat{P}||Q)\right).
\end{align}
\end{thm}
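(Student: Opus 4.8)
The plan is to prove this upper bound through the \emph{method of types}, the classical route to Sanov's theorem. First I would set up the type-class decomposition. For a fixed empirical distribution (type) $\widehat{P}$, let $T(\widehat{P})$ denote the set of sequences $x \in \mathcal{X}^p$ whose empirical distribution equals $\widehat{P}$. Since $\mathcal{S}$ is a set of empirical distributions, the event under consideration decomposes as a disjoint union, $\{X : \widehat{P}(X) \in \mathcal{S}\} = \bigcup_{\widehat{P} \in \mathcal{S}} T(\widehat{P})$, so that $\mathbb{P}_Q(\mathcal{S}) = \sum_{\widehat{P} \in \mathcal{S}} \mathbb{P}_Q(T(\widehat{P}))$. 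The task then reduces to controlling the probability of a single type class and counting how many type classes can appear in the sum.

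The central step, and the main technical content, is a per-type bound asserting that for any type $\widehat{P}$ one has $\mathbb{P}_Q(T(\widehat{P})) \leq \exp(-p\, D(\widehat{P}||Q))$. To establish this I would first observe that every sequence $x \in T(\widehat{P})$ carries identical probability under $Q$, namely $\prod_i Q_i^{\,p\widehat{P}_i} = \exp(-p(H(\widehat{P}) + D(\widehat{P}||Q)))$, where $H(\widehat{P}) = -\sum_i \widehat{P}_i \log \widehat{P}_i$. This uses the elementary identity $\sum_i \widehat{P}_i \log Q_i = -H(\widehat{P}) - D(\widehat{P}||Q)$, obtained by adding and subtracting $\sum_i \widehat{P}_i \log \widehat{P}_i$. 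Second, I would bound the cardinality of the type class by $|T(\widehat{P})| \leq \exp(p\, H(\widehat{P}))$. Multiplying this cardinality bound by the common per-sequence probability cancels the entropy term and yields the claimed per-type estimate.

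Finally I would assemble the pieces. Replacing $D(\widehat{P}||Q)$ by its minimum over $\mathcal{S}$ in each summand gives $\mathbb{P}_Q(T(\widehat{P})) \leq \exp(-p \min_{\widehat{P} \in \mathcal{S}} D(\widehat{P}||Q))$ uniformly over $\widehat{P} \in \mathcal{S}$, so summing produces a factor equal to $|\mathcal{S}|$. The proof closes by counting the total number of types: each of the $n$ category counts $p_i$ takes a value in $\{0,1,\dots,p\}$, so there are at most $(p+1)^n$ distinct types, whence $|\mathcal{S}| \leq (p+1)^n$. This delivers exactly the stated bound $\mathbb{P}_Q(\mathcal{S}) \leq (p+1)^n \exp(-p \min_{\widehat{P} \in \mathcal{S}} D(\widehat{P}||Q))$.

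I expect the main obstacle to be the cardinality estimate $|T(\widehat{P})| \leq \exp(p\, H(\widehat{P}))$, since a direct attack via the multinomial coefficient and Stirling bounds is tedious. The cleanest route avoids this entirely: evaluate the type class under the type's \emph{own} product measure $\widehat{P}^{\,p}$. Each $x \in T(\widehat{P})$ then has probability $\prod_i \widehat{P}_i^{\,p\widehat{P}_i} = \exp(-p\, H(\widehat{P}))$, so $\widehat{P}^{\,p}(T(\widehat{P})) = |T(\widehat{P})|\,\exp(-p\, H(\widehat{P}))$. Because $\widehat{P}^{\,p}$ is a probability measure this quantity is at most $1$, giving the entropy bound immediately and sidestepping any combinatorial estimation.
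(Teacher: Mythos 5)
Your proof is correct and is precisely the classical method-of-types argument behind the result the paper cites without proof (Cover \& Thomas, Theorem 11.4.1): decompose $\mathbb{P}_Q(\mathcal{S})$ over type classes, bound each class probability by $\exp\left(-p\,D(\widehat{P}||Q)\right)$ via the common per-sequence probability and the cardinality bound $|T(\widehat{P})| \leq \exp\left(p\,H(\widehat{P})\right)$, and multiply by the type count $(p+1)^n$. Your closing observation---obtaining the cardinality bound by evaluating $T(\widehat{P})$ under the product measure $\widehat{P}^{\,p}$---is the standard clean route and involves no gap.
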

The second ingredient is also readily derived from results in large deviations theory. 
\begin{thm} \label{thm:bnd1}
Let $\mathcal{S}$ be a set of empirical distributions such that $\mathbb{P}_{{Q}}(\widehat{P}^\ell) \geq \mathbb{P}_{{Q}}(\widehat{P})$ for all $\widehat{P} \in \mathcal{S}$.
Then, 
\begin{align*}
\min_{ \widehat{P} \in \mathcal{S} }  D(\widehat{P}||Q)
 \geq D(\widehat{P}^{\ell}||Q) - \frac{n}{p} \log (p+1).
\end{align*}
\end{thm}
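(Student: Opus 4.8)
The plan is to exploit the standard method-of-types correspondence between the $Q$-probability of an empirical distribution (type) and its KL divergence to $Q$, and then to sandwich the probability of the maximizing type $\widehat{P}^{\ell}$ against the probability of an arbitrary competitor in $\mathcal{S}$.

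First I would recall the two-sided bound on the probability of a single type. For any empirical distribution $\widehat{P}$ of $p$ samples over $n$ categories, the method of types \cite{cover2012elements} gives $\frac{1}{(p+1)^n}\exp(-p\, D(\widehat{P}||Q)) \leq \mathbb{P}_Q(\widehat{P}) \leq \exp(-p\, D(\widehat{P}||Q))$, where the $(p+1)^n$ factor is the crude count on the number of distinct types and the exponential factors come from bounding the associated multinomial coefficient. The upper bound here is exactly the single-type specialization of Sanov's Theorem, already cited as Theorem~\ref{thm:sanov}; the lower bound is the matching estimate.

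Next I would invoke the hypothesis that $\widehat{P}^{\ell}$ is the most probable type in $\mathcal{S}$, i.e. $\mathbb{P}_{Q}(\widehat{P}^{\ell}) \geq \mathbb{P}_{Q}(\widehat{P})$ for every $\widehat{P} \in \mathcal{S}$. Chaining this with the upper bound applied to $\widehat{P}^{\ell}$ and the lower bound applied to $\widehat{P}$ yields $\exp(-p\, D(\widehat{P}^{\ell}||Q)) \geq \mathbb{P}_{Q}(\widehat{P}^{\ell}) \geq \mathbb{P}_{Q}(\widehat{P}) \geq \frac{1}{(p+1)^n}\exp(-p\, D(\widehat{P}||Q))$. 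Taking logarithms, dividing by $p$, and rearranging gives $D(\widehat{P}||Q) \geq D(\widehat{P}^{\ell}||Q) - \frac{n}{p}\log(p+1)$. Since the competitor $\widehat{P} \in \mathcal{S}$ was arbitrary, the bound survives taking the minimum over $\mathcal{S}$ on the left-hand side, which is precisely the claim.

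I do not expect a genuine obstacle: once the two-sided type bound is in place the argument is a one-line sandwich. The only point requiring care is to invoke the correct lower bound on $\mathbb{P}_{Q}(\widehat{P})$, since it is exactly the $(p+1)^{-n}$ prefactor there that produces the $\frac{n}{p}\log(p+1)$ slack in the statement, so the two constants must be tracked consistently. A minor degenerate case worth a remark is when some $\widehat{P} \in \mathcal{S}$ has $\mathbb{P}_{Q}(\widehat{P}) = 0$ (a category with $Q_i = 0$ but $\widehat{P}_i > 0$), in which case $D(\widehat{P}||Q) = +\infty$ and the inequality holds trivially.
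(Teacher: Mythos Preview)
Your proposal is correct and is essentially identical to the paper's own proof: both invoke the two-sided method-of-types bound $\frac{1}{(p+1)^n}\exp(-p\,D(\widehat{P}||Q)) \leq \mathbb{P}_Q(\widehat{P}) \leq \exp(-p\,D(\widehat{P}||Q))$ from \cite{cover2012elements}, chain the upper bound at $\widehat{P}^{\ell}$ against the lower bound at an arbitrary $\widehat{P}\in\mathcal{S}$ via the hypothesis $\mathbb{P}_Q(\widehat{P}^{\ell})\geq \mathbb{P}_Q(\widehat{P})$, and then take logarithms. Your added remark on the degenerate $D(\widehat{P}||Q)=+\infty$ case is a harmless extra.
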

\begin{proof}
The following inequalities hold \cite{cover2012elements} (Theorem 11.1.4):
\begin{align} \label{eqn:typebnd} 
\frac{1}{(p+1)^n} \exp\left(-p D(\widehat{P}||Q)\right) \leq \mathbb{P}_Q(\widehat{P}) \leq \exp \left(-p D(\widehat{P}||Q) \right).
\end{align}
Thus, for any $\mathbb{P}_Q(\widehat{P}^{m}) \leq \mathbb{P}_Q (\widehat{P}^{\ell})$, 
\begin{align*}
\frac{1}{(p+1)^n} \exp\left(-p D(\widehat{P}^{m}||Q)\right) \leq \exp\left(-p D(\widehat{P}^{\ell}||Q)\right)
\end{align*}
which implies the result, 
completing the proof of Theorem \ref{thm:bnd1}.
\end{proof}

Combining Theorems \ref{thm:sanov}  and \ref{thm:bnd1}, we have 
\begin{align*}
\mathbb{P}_Q(\{ \widehat{P}^1, \widehat{P}^2, \dots, \widehat{P}^\ell \} ) \leq  (p+1)^{2n} \exp\left(-p D(\widehat{P}^\ell ||Q)   \right)
\end{align*}
provided $\mathbb{P}_{{Q}}(\widehat{P}^1) \leq \mathbb{P}_{Q}(\widehat{P}^2) \leq \dots \leq \mathbb{P}_{Q}( \widehat{P}^{\ell})$.
This provides a simple way to confirm if a sample is \emph{contaminated} at a particular significance level $\epsilon$.  In particular, assume $\widehat{P}(X) = \widehat{P}^{\ell}$.  If
\begin{align*} 
 (p+1)^{2n} \exp\left(-p D(\widehat{P}(X)||Q)  \right) \leq \epsilon 
\end{align*}
or equivalently
\begin{align} \label{eqn:4454e}
 D(\widehat{P}(X)||Q)  \geq \frac{1}{p} \log \left( \frac{1}{\epsilon} \right) + \frac{2n}{p} \log (p+1) 
\end{align}
then $X$ is not typical; $X$ satisfies 
\begin{align*}
\mathbb{P}_Q( \{ \widehat{P}^1, \widehat{P}^2, \dots, \widehat{P}^\ell \} ) \leq  \epsilon
\end{align*}
and is contaminated with significance  $ \epsilon$.  If (\ref{eqn:4454e}) holds for all $Q \in \mathcal{Q}$, in other words, if 
\begin{eqnarray*}
\inf_{ Q \in \mathcal{Q} } D(\widehat{P}(X)||Q) \geq  \frac{1}{p} \log \left( \frac{1}{\epsilon} \right) + \frac{2n}{p} \log (p+1)
\end{eqnarray*}
we conclude then $X$ is not typical with respect to $(\mathcal{Q},\epsilon)$, implying the result.

\section*{Appendix B} \label{app:C}
Proof of Theorem \ref{thm:conv}.  The proof requires three main steps.    The first step is to show that when $\alpha$ is sufficiently close to $\kappa(\widehat{P}||Q^{0})$, the solution to (\ref{eqn:contop}) can be written in closed form.  The second step is to show a number of properties regarding the asymptotic behavior of $\alpha_{\mathrm{L}}$ as $p$ grows; specifically, $\alpha_{\mathrm{L}}$ is monotone increasing in $p$, and converges to the separation distance; these properties imply that for large $p$, the closed form solution is valid.  Lastly, we can bound the difference between $\kappa(\widehat{P}||Q^{0})$  and $\alpha_{\mathrm{L}}$ using the closed form solution.


{\bf{Step 1:}} For $\alpha$ close to the separation distance (equivalently, for large $p$, as we show next in Theorem \ref{thm:largep}), the optimization has a closed form.  This is captured in the following Theorem.  Note the theorem assumes there is a unique largest  $\frac{\widehat{P}_\ell}{Q_\ell}$; in the degenerate case when this is not true, the theorem can be restated introducing at most a factor of $n$, which does not affect the final result.  
\begin{thm}
Let $\frac{\widehat{P}_i}{Q_i}$ be ordered such that $\frac{\widehat{P}_\ell}{Q_\ell} <  \frac{\widehat{P}_k}{Q_k} \leq \dots \leq \frac{\widehat{P}_n}{Q_n} $.  For $\alpha \in \lbrack 1 - \widehat{P}_\ell - \frac{\widehat{P}_k}{Q_k} \left(1 - Q_\ell   \right)    , \ \kappa(\widehat{P}||Q) \rbrack$ 
\begin{align} \label{eqn:soon}
P_i^* = \begin{cases} 
\frac{Q_i \left(1-  \frac{\widehat{P}_{\ell}}{1-\alpha} \right)}{ 1 - Q_\ell  }   \qquad & i \neq \ell \\
\frac{\widehat{P}_{\ell}}{1-\alpha}  \qquad&  i = \ell
\end{cases}
\end{align}
is the unique solution to (\ref{eqn:contop}).
\end{thm}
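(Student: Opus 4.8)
The plan is to treat the program in (\ref{eqn:contop}) with $\mathcal{Q}=\{Q\}$ as a strictly convex minimization in $P$ alone, over the simplex intersected with the box $\{P_i\le \widehat P_i/(1-\alpha)\}$, and to solve it through its KKT system. First I would fix the geometry: because $\widehat P_\ell/Q_\ell$ is the strict minimum of the ratios, $\ell$ is exactly the index maximizing $1-\widehat P_i/Q_i$, so $\kappa(\widehat P\|Q)=1-\widehat P_\ell/Q_\ell$. The unconstrained minimizer of $D(P\|Q)$ on $\mathbb{S}^n$ is $P=Q$; it satisfies every box constraint iff $\alpha\ge\kappa$, and for $\alpha<\kappa$ the single constraint it violates is the one at $\ell$, since $\widehat P_\ell/(1-\alpha)<Q_\ell$. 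This motivates guessing that at the optimum only the $\ell$-th upper bound is active.

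Under that guess I would write the Lagrangian $D(P\|Q)+\sum_i\lambda_i\bigl(P_i-\widehat P_i/(1-\alpha)\bigr)+\nu\bigl(\sum_i P_i-1\bigr)$ with $\lambda_i\ge0$, noting that the nonnegativity constraints never bind because $\nabla D$ blows up at the boundary of the simplex. Setting $\lambda_i=0$ for $i\ne\ell$, stationarity gives $\log(P_i/Q_i)=-1-\nu$ for all $i\ne\ell$, i.e. $P_i\propto Q_i$ off the index $\ell$. The active constraint pins $P_\ell=\widehat P_\ell/(1-\alpha)$, and normalizing the residual mass $1-\widehat P_\ell/(1-\alpha)$ in proportion to $Q_i$ over the coordinates of total mass $1-Q_\ell$ reproduces (\ref{eqn:soon}) verbatim.

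The hard part will be verifying that this single-active-constraint guess is self-consistent, which is precisely where the stated interval for $\alpha$ enters, and I expect it to be the real content of the proof. Two checks are needed. For dual feasibility I would compute $\lambda_\ell=\log(P^*_i/Q_i)-\log(P_\ell/Q_\ell)$ for any $i\ne\ell$ and observe that for $\alpha<\kappa$ one has $P_\ell/Q_\ell=\widehat P_\ell/((1-\alpha)Q_\ell)<1$ while the common off-$\ell$ ratio $(1-P_\ell)/(1-Q_\ell)>1$, so $\lambda_\ell>0$. For primal feasibility I would use that $P^*_i/Q_i$ is constant over $i\ne\ell$, so the remaining box constraints are tightest at the second-smallest ratio, index $k$; imposing $P^*_k\le\widehat P_k/(1-\alpha)$ and clearing the factor $1-\alpha$ rearranges to $1-\alpha\le\widehat P_\ell+(\widehat P_k/Q_k)(1-Q_\ell)$, which is exactly the lower endpoint of the interval. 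Thus the claimed formula is the optimizer precisely on the range of $\alpha$ over which the one-active-constraint configuration remains both primal and dual feasible.

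Finally, strict convexity of $D(\cdot\|Q)$ on the simplex together with convexity of the feasible set makes the KKT point the unique global minimizer, giving both the closed form and its uniqueness. I would close by noting that a tie among the minimal ratios is handled by grouping the tied coordinates, inflating constants by at most a factor $n$ as the preceding remark indicates, which does not affect the downstream convergence estimate.
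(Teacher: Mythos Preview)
Your proposal is correct and follows essentially the same approach as the paper: write down the Lagrangian with multipliers $\lambda_i$ for the box constraints and $\nu$ for the simplex constraint, guess that only $\lambda_\ell$ is active, and verify the KKT conditions, with dual feasibility $\lambda_\ell\ge 0$ yielding the upper endpoint $\alpha\le\kappa(\widehat P\|Q)$ and primal feasibility at the second-smallest ratio index $k$ yielding the lower endpoint. The only cosmetic difference is that the paper simply posits explicit values for $P^*$, $\lambda^*$, $\nu^*$ and checks the four KKT conditions one by one, whereas you supply a bit of motivation for why only the $\ell$-th constraint should bind before doing the same verification.
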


\begin{proof}
The result can be shown by verifying the conditions KKT conditions with
\begin{align} \label{eqn:lamstarer}
\lambda_i^* = \begin{cases}  0  & \qquad  i \neq \ell  \\
   \log \left( \frac{Q_\ell \left(1-  \frac{\widehat{P}_{\ell}}{1-\alpha} \right)}{ \left(1 - Q_\ell \right)  \frac{\widehat{P}_{\ell}}{1-\alpha}  }\right)  &\qquad  i = \ell \ 
\end{cases} 
\end{align}
and 
\begin{align*}
\nu^* = \log \left( \frac{1-Q_\ell}{1-  \frac{\widehat{P}_{\ell}}{1-\alpha} }   \right) - 1
\end{align*}
where the Lagrangian \cite{boyd2009convex} is given as
\begin{align*}
L(P,\lambda,\nu) = \sum_i P_i \log \frac{P_i}{Q_i} + \sum_i \lambda_i \left(P_i - \frac{\widehat{P}_i}{1-\alpha}\right) + \nu \left(\sum_i P_i -1\right).
\end{align*}
These primal and dual optimal points are derived using methods similar to \cite{boyd2009convex} (p. 228, 248); in what follows, we simply verify the KKT conditions which suffice to complete the proof.   First, we confirm that the solution is a stationary point:
\begin{eqnarray*}
\left.
\frac{\partial L(P,\lambda,\nu)}{ \partial P_i} \right \vert_{P^*, \lambda^*, \nu^*} = \left. \log \frac{P_i}{Q_i} + 1 + \lambda_i + \nu \right \vert_{P^*, \lambda^*, \nu^*} = 0
\end{eqnarray*}
which holds for all $i$.   The complementary slackness condition is readily verified: 
\begin{align*}
\lambda_i^* \left(P_i^* - \frac{\widehat{P}_i}{1-\alpha} \right)  = 0, \quad i =1,\dots, n.
\end{align*}
It remains to show conditions under which the solution is primal and dual feasible.  First, $\lambda_\ell^* \geq 0$ provided 
\begin{align*}
\frac{Q_\ell \left(1-  \frac{\widehat{P}_{\ell}}{1-\alpha} \right)}{ \left(1 - Q_\ell \right)  \frac{\widehat{P}_{\ell}}{1-\alpha}  } \geq 1.
\end{align*}
After arranging terms, the above holds when $\alpha \leq 1 - \frac{\widehat{P}_\ell}{Q_\ell} =  \kappa(\widehat{P}||Q)$.
The primal equality constraint, $\sum_i P_i^* =1$, is readily verified.  Lastly, we check the primal inequality constraints. $P_\ell^*$ is trivially feasible.  For $i \neq \ell$, we require
\begin{align*}
P^*_i =  \frac{Q_i \left(1-  \frac{\widehat{P}_{\ell}}{1-\alpha} \right)}{ 1 - Q_\ell  }   \leq \frac{\widehat{P}_i}{1-\alpha}  
\end{align*}
which holds when 
\begin{align*}
\alpha \geq 1 - \widehat{P}_\ell -  \frac{\widehat{P}_i}{Q_i} \left(1 - Q_\ell  \right). 
\end{align*}
Since $\frac{\widehat{P}_i}{Q_i} \geq \frac{\widehat{P}_k}{Q_k} $ for all $i \neq \ell$, the solution is feasible if 
\begin{align*}
\alpha \geq 1 - \widehat{P}_\ell - \frac{\widehat{P}_k}{Q_k} \left(1 - Q_\ell   \right) .
\end{align*}
We conclude that the KKT conditions are satisfied for the range $\alpha$ specified in the statement of the theorem.  Since the objective is strictly convex the solution is unique, which completes the proof.  

\end{proof}

{\bf{Step 2:}}  We show that as $p$ approaches infinity, $\alpha$ approaches the separation distance.  More specifically, we have the following theorem. 
\begin{thm} \label{thm:largep}
\begin{align} \nonumber
\alpha_\mathrm{L} \leq  \kappa(\widehat{P}||Q^{0}) \qquad  \mbox{and} \qquad \lim_{p \rightarrow \infty} \alpha_\mathrm{L} = \kappa(\widehat{P}||Q^{0}) 
\end{align}
\end{thm}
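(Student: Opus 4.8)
The plan is to prove the two claims separately, both resting on a single geometric observation: the model point $Q^{0}$ enters the feasible set $\mathcal{P}(X,\alpha)$ exactly when $\alpha$ reaches the separation distance. Concretely, $Q^{0} \in \mathcal{P}(X,\alpha)$ iff $Q_i^{0} \le \widehat{P}_i/(1-\alpha)$ for every $i$, i.e. iff $1-\alpha \le \min_i \widehat{P}_i/Q_i^{0} = 1 - \kappa(\widehat{P}||Q^{0})$, which holds iff $\alpha \ge \kappa(\widehat{P}||Q^{0})$. Since $D(P||Q^{0}) = 0$ only at $P = Q^{0}$ and the feasible polytope grows with $\alpha$, this shows that $D^*_\alpha$ is nonincreasing in $\alpha$, is strictly positive for $\alpha < \kappa(\widehat{P}||Q^{0})$, and equals $0$ for $\alpha \ge \kappa(\widehat{P}||Q^{0})$. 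I would also record that $D^*_\alpha$ is continuous in $\alpha$ (it is the value function of a parametric convex program whose feasible set moves continuously), and, crucially, that for a \emph{fixed} $\widehat{P}$ it does not depend on $p$.

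For the upper bound, let $h_p(\alpha)$ denote the right-hand side of the inequality defining $\alpha_{\mathrm{L}}$ in (\ref{eqn:alp_ub}). Since $\epsilon \in (0,1)$ and $p(1-\alpha) > 0$, both $\log(1/\epsilon)$ and $\log(p(1-\alpha)+1)$ are positive, so $h_p(\alpha) > 0$ for every $\alpha < 1$. For $\alpha \ge \kappa(\widehat{P}||Q^{0})$ we have $D^*_\alpha = 0 < h_p(\alpha)$, so the condition defining $\alpha_{\mathrm{L}}$ fails on $[\kappa(\widehat{P}||Q^{0}), 1)$; hence the maximizing $\alpha$ lies below $\kappa(\widehat{P}||Q^{0})$, giving $\alpha_{\mathrm{L}} \le \kappa(\widehat{P}||Q^{0})$.

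For the limit, I would exploit that $D^*_\alpha$ is $p$-independent while $h_p(\alpha)$ is strictly decreasing in $p$ and, for any fixed $\alpha < 1$, satisfies $h_p(\alpha) \to 0$ as $p \to \infty$ (the dominant term is $2n\log(p(1-\alpha)+1)/(p(1-\alpha)) \to 0$). Consequently the super-level set $\{\alpha : D^*_\alpha \ge h_p(\alpha)\}$ is nested increasing in $p$, so $\alpha_{\mathrm{L}}$ is nondecreasing in $p$ and, being bounded above by $\kappa(\widehat{P}||Q^{0})$, converges. To identify the limit, fix any $\alpha_0 < \kappa(\widehat{P}||Q^{0})$; then $D^*_{\alpha_0}$ is a strictly positive constant independent of $p$, whereas $h_p(\alpha_0) \to 0$, so for all $p$ large enough $D^*_{\alpha_0} \ge h_p(\alpha_0)$ and therefore $\alpha_{\mathrm{L}} \ge \alpha_0$. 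Letting $\alpha_0 \uparrow \kappa(\widehat{P}||Q^{0})$ gives $\liminf_{p\to\infty}\alpha_{\mathrm{L}} \ge \kappa(\widehat{P}||Q^{0})$, which together with the upper bound yields $\lim_{p\to\infty}\alpha_{\mathrm{L}} = \kappa(\widehat{P}||Q^{0})$.

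The main obstacle is the strict-positivity/vanishing dichotomy for $D^*_\alpha$: establishing cleanly that $D^*_\alpha > 0$ for all $\alpha < \kappa(\widehat{P}||Q^{0})$ and $D^*_\alpha = 0$ precisely at $\alpha = \kappa(\widehat{P}||Q^{0})$, together with continuity of the value function. This is where the geometry of $\mathcal{P}(X,\alpha)$ and the definiteness of the KL divergence do the real work; once it is in hand, both the bound and the limit reduce to the monotone interplay between the fixed curve $D^*_\alpha$ and the shrinking threshold $h_p(\alpha)$. A minor point to handle is the degenerate case $\kappa(\widehat{P}||Q^{0}) = 1$ (some $\widehat{P}_i = 0$), where the same argument applies verbatim by taking $\alpha_0 \uparrow 1$.
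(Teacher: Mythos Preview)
Your proposal is correct and follows essentially the same line as the paper's own proof (Step~2 of Appendix~B): both arguments rest on the monotonicity of $D^*_\alpha$ in $\alpha$, its vanishing at $\alpha=\kappa(\widehat P\|Q^0)$, the $p$-independence of $D^*_\alpha$, and the fact that the threshold $h_p(\alpha)=\gamma_{\mathrm L}(\alpha,p)$ is positive and shrinks to zero as $p\to\infty$. The one notable difference is in how the limit is pinned down: the paper appeals to \emph{left continuity} of $D^*_\alpha$ at $\kappa$, whereas you instead make explicit the dichotomy $D^*_\alpha>0$ for $\alpha<\kappa$ and $D^*_\alpha=0$ for $\alpha\ge\kappa$, obtained from the clean geometric observation that $Q^0\in\mathcal P(X,\alpha)$ iff $\alpha\ge\kappa$. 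Your route is slightly more direct---strict positivity below $\kappa$ is exactly what forces $\alpha_{\mathrm L}\ge\alpha_0$ once $h_p(\alpha_0)$ drops below $D^*_{\alpha_0}$---and it does not rely on continuity of the value function at all, only on compactness of $\mathcal P(X,\alpha)$ and definiteness of KL.
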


We begin the proof by examining the behavior of $D_\alpha^*$ and $\alpha_{\mathrm{L}}$.  Note that $D_\alpha^*$ (the minimizer of (\ref{eqn:contop})) is monotone non-increasing in $\alpha$, as increasing $\alpha$ relaxes the constraints.  For $\alpha = \kappa(\widehat{P}||Q^{0})$, $D_\alpha^* = 0$ as the constraints allow $P_i=Q_i$ for all $i$ (as KL divergence is minimized if and only if  $P_i=Q_i$ for all $i$).   Define 
\begin{align}
 \gamma_\mathrm{L}(\alpha, p)  = \frac{1}{p(1-\alpha)} \log \left( \frac{1}{\epsilon} \right) + \frac{2n}{p(1-\alpha)} \log \left(p(1-\alpha)+1\right)    \hspace{-.3cm}   \nonumber
\end{align}
for $\alpha \in \lbrack 0,1 \rbrack$, $p>0$. We can write (\ref{eqn:alp_ub}) as
\begin{align}
\alpha_{\mathrm{L}} =  \max \left\{\alpha: D^*_\alpha \geq  \gamma_\mathrm{L}(\alpha, p)   \right\} \nonumber 
\end{align}
For fixed $p$, $ \gamma_\mathrm{L}(\alpha, p)$ is strictly increasing in $\alpha$ for $\alpha \in \lbrack 0,1 \rbrack$.    This (and since $D_\alpha^*$ is monotone non-decreasing in $\alpha$) implies existence and uniqueness of $\alpha_\mathrm{L}$ for fixed $p$.  Next, for fixed $\alpha$, $ \gamma_\mathrm{L}(\alpha, p)$ is strictly decreasing in $p$.  Since  $D_\alpha^*$ is not a function of $p$, we conclude that $\alpha_{\mathrm{L}}$ is non-decreasing in $p$.

Lastly, to prove the limit statement, we require $D_\alpha^*$ be left continuous at  $\alpha=\kappa(\widehat{P}||Q^{0})$; 
 for any $ \epsilon>0$, there exists some $\delta>0$ such that $D^*_{\kappa(\widehat{P}||Q^{0})-\delta} < \epsilon$.  
This follows as the objective is continuous in the optimization variables, and constraints are continuous in $\alpha$; an arbitrarily small increase in the objective can be realized by sufficiently reducing $\alpha$.

{\bf{Step 3:}} Bound  $\kappa(\widehat{P}||Q) - {\alpha_\mathrm{L}}$  using the closed form solution.

The value of KL divergence at $P^*$ from  (\ref{eqn:soon}) is 
\begin{eqnarray} \nonumber
 D_\alpha^* &=& D(P^*||Q) = \sum_{i\neq \ell} \frac{Q_i\left(1-  \frac{\widehat{P}_{\ell}}{1-\alpha} \right)}{ 1 - Q_\ell  }  \log \left( \frac{1-\frac{\widehat{P}_{\ell}}{1-\alpha}}{ 1 - Q_\ell  }   \right) + \frac{\widehat{P}_{\ell}}{1-\alpha} \log \frac{\frac{\widehat{P}_{\ell}}{1-\alpha} }{Q_\ell}  \\ \nonumber
&=&   \left(1- \frac{\widehat{P}_{\ell}}{1-\alpha} \right)  \log \left( \frac{1 - \frac{\widehat{P}_{\ell}}{1-\alpha} }{ 1 - Q_\ell  }   \right) + \frac{\widehat{P}_{\ell}}{1-\alpha} \log \frac{\frac{\widehat{P}_{\ell}}{1-\alpha}}{Q_\ell} \\ \nonumber
&\leq&   \frac{\left(Q_\ell- \frac{\widehat{P}_{\ell}}{1-\alpha} \right)^2}{Q_{\ell}(1-Q_\ell) }  \\
& = &  \frac{Q_\ell \left( \kappa(\widehat{P}||Q) - \alpha \right)^2}{(1-Q_\ell)(1-\alpha)^2 }  \label{eqn:ineq44}
\end{eqnarray}
where the inequality follows since $\log(x) \leq x - 1$.  We are ready to bound the difference between $\alpha_\mathrm{L}$ and the separation distance.  Recall the definition of  $\alpha_\mathrm{L}$;  $\alpha_\mathrm{L}$ must satisfy
\begin{eqnarray*}
 D_{\alpha_\mathrm{L}}^*\leq \frac{1}{p(1-{\alpha_\mathrm{L}})} \log \left( \frac{1}{\epsilon} \right)  + \frac{n}{p(1-{\alpha_\mathrm{L}})} \log \left(p(1-\alpha_\mathrm{L})+1\right)  
\end{eqnarray*}
 and by (\ref{eqn:ineq44})
 \begin{eqnarray*}
  \frac{Q_{\ell} \left( \kappa(\widehat{P}||Q) - {\alpha_\mathrm{L}}  \right)^2  }{(1- {\alpha_\mathrm{L}})(1-Q_\ell) } \leq \frac{1}{p} \log \left( \frac{1}{\epsilon} \right)  + \frac{n}{p} \log \left(p(1-{\alpha_\mathrm{L}})+1\right)  
\end{eqnarray*}
which implies the result
\begin{eqnarray*}
  \kappa(\widehat{P}||Q) - {\alpha_\mathrm{L}} \leq \sqrt{ \frac{1}{p} \log \left( \frac{1}{\epsilon} \right)  + \frac{n}{p} \log \left(p+1\right)   }  =
  O \left( \sqrt{\frac{\log p }{p} } \right).
\end{eqnarray*}

\end{document}